\newcommand{\e}{{\rm e}}
\newcommand{\E}{{\mathbb E}}
\newcommand{\Pa}{{\mathbb P}}
\newcommand{\Q}{{\mathbb Q}}
\newcommand{\R}{{\mathbb R}}
\newcommand{\Acal}{{\mathcal A}}
\newcommand{\Fcal}{{\mathcal F}}
\newcommand{\Hcal}{{\mathcal H}}
\newcommand{\Zcal}{{\mathcal Z}}
\DeclareMathOperator{\trace}{trace}
\newtheorem{proposition}{Proposition}[section]
\newtheorem{lemma}[proposition]{Lemma}
\newtheorem{theorem}[proposition]{Theorem}
\newtheorem{remark}[proposition]{Remark}
\newtheorem{exampleemph}[proposition]{Example}   % upshape
\newenvironment{example}{\begin{exampleemph}\begin{upshape}}{\end{upshape}\end{exampleemph}} % upshape
\newcommand*{\rom}[1]{\expandafter\@slowromancap\romannumeral #1@}
\begin{document}

\title{Discount Models\thanks{I thank Bent Jesper Christensen, Martin Larsson, and two anonymous referees for helpful comments.}}

\author{Damir Filipovi\'c\thanks{EPFL and Swiss Finance Institute, 1015 Lausanne, Switzerland. Email: damir.filipovic@epfl.ch}   }
\date{25 July 2023\\ forthcoming in \emph{Finance and Stochastics}}
\maketitle
\begin{abstract}
Discount is the difference between the face value of a bond and its present value. I propose an arbitrage-free dynamic framework for discount models, which provides an alternative to the Heath--Jarrow--Morton framework for forward rates. I derive general consistency conditions for factor models, and discuss affine term structure models in particular. There are several open problems, and I outline possible directions for further research.\\

\noindent\textbf{Keywords:} discount, factor models, stochastic partial differential equation, term structure models, zero-coupon bonds\\

\noindent\textbf{JEL classification:} C32, G12, G13\\

\noindent\textbf{MSC 2020 classification:} 91B70, 91G20, 91G30

\end{abstract}

\section{Discount}

Let $P(t,T)$ denote the time-$t$ price of a zero-coupon bond with maturity $T$, or short, a $T$-bond. Define the corresponding \emph{discount}
\[ H(t,T):=1-P(t,T).\]
The discount $H(t,T)$ is the difference between the face value of the bond and its present value. It is the interest earned on investing in a $T$-bond at $t$ and hold it to maturity $T$. As such, it quantifies the time value of money. It also equals the time-$t$ price of a long position in a floating rate note paying overnight short rates $r_t=-\partial_T P(t,T)|_{T=t}$ minus a short position in a $T$-bond. We call this long/short portfolio the \emph{$T$-discount}. The net cash flow generated by the $T$-discount is $r_t\,dt$ at any $t\le T$ and $0$ after $T$. The payments of the principals of the floating rate note and the zero-coupon bond at $T$ offset each other.  A $T$-discount is therefore identical to the floating leg of an overnight indexed swap with maturity $T$.

The gains process, say $G(t,T)$, from holding a $T$-discount over $[0,t]$, where the cash flows are continuously invested in the money market account that earns interest at the short rate, is given by the sum
\begin{equation}\label{gainsdef}
  G(t,T) = \underbrace{\int_0^t  \e^{\int_s^t r_u\,du}  r_s\, ds}_{\text{accumulated cash flow}} + \underbrace{H(t,T)}_{\text{spot value}}  = \e^{\int_0^t r_s\,ds} - P(t,T).
\end{equation}

%
%The arbitrage-free price dynamics of the $T$-bond price is of the form
%\begin{equation}\label{NAeq}
% dP(t,T) =  P(t,T)r_t\,dt + P(t,T)v(t,T)\,dW_t  ,
%\end{equation}
%where $v(t,T)$ is the $n\times 1$-dimensional return volatility process. Holding a $T$-discount over $[t,t+dt]$ results in a net profit and loss of
%\[ r_t\,dt - dP(t,T) = H(t,T)r_t\,dt   -P(t,T)v(t,T)\,dW_t  .\]
%
%
%
%
%
%Holding one such contract over $[t,t+dt]$ results in a net profit and loss of
%\[ \sum_{t<T_i} c_i r_t\,dt - dp_t = V_t r_t\,dt   -\sum_{t<T_i} c_i P(t,T_i) v(t,T_i)\,dW_t  .\]
%
%

\section{Discount framework}\label{secTVMM}

In analogy to the Heath--Jarrow--Morton (HJM) approach \cite{hea_jar_mor_92} for modeling the forward rates $f(t,T)=-\partial_T \log P(t,T)$, we now formulate an arbitrage-free dynamic framework for discount models. We assume a filtered probability space $(\Omega,\Fcal,(\Fcal_t)_{t\ge 0},\Q)$, satisfying the usual conditions, carrying a $n$-dimensional standard Brownian motion $W$. For simplicity of exposition, we skip technical details and assume throughout that all stochastic processes are adapted, and regular enough such that the differential and integral operations are well defined. For more background and technical details, also on arbitrage pricing, we refer to the textbooks \cite[Chapter 19]{bjo_20} and \cite{fil_09}.

We recall that $\Q$ is a risk-neutral \emph{(local) martingale measure} for the bond market if all discounted $T$-bond price processes $\e^{-\int_0^t r_s\, ds} P(t,T)$ are $\Q$-(local) martingales. In view of \eqref{gainsdef}, this holds if and only if all discounted $T$-discount gains processes $\e^{-\int_0^t r_s\, ds} G(t,T)$ are $\Q$-(local) martingales. It is well known that the bond market is arbitrage-free if and (essentially) only if $\Q$ is a local martingale measure, see, e.g., \cite[Chapter 11]{bjo_20} or \cite[Section 4.3.4]{fil_09}.

We represent the $T$-discount price at any $t\le T$ in terms of its maturity derivative,
\[  H(t,T)= \int_t^T h(t,s)\,ds ,\]
where the \emph{discount derivative} $h(t,T)$ is assumed to be an It\^o process with dynamics of the form
\begin{equation}\label{TVMprior}
 dh(t,T) = \alpha(t,T)\,dt + \sigma(t,T)\,dW_t ,
\end{equation}
for some drift and volatility processes $\alpha(t,T)$ and $\sigma(t,T)$, respectively. We then specify the $T$-bond price by
\begin{equation}\label{Tbondprior}
P(t,T) = 1 -\int_t^T h(t,s)\,ds.
\end{equation}
The $T$-bond specification \eqref{Tbondprior} is the linearized counterpart to the relation $P(t,T)=\e^{-\int_t^T f(t,s)\,ds}$. Differentiating in $T$, the relationship between the discount derivative and forward rate is obtained as
\[h(t,T)= P(t,T) f(t,T)  .\]
In particular,
\begin{equation}\label{eqSR}
  h(t,t) =  f(t,t) = r_t
\end{equation}
equals the short rate.

The challenge of our linear approach is that $T$-bond prices should be positive, $P(t,T)>0$, which is equivalent to requiring that
\begin{equation}\label{inthle1}
 \int_t^T h(t,s)\,ds<1,\quad \text{for all $t\le T$.}
\end{equation}
The latter follows in particular if we can show that  $\Q$ is a martingale measure for the bond market specified by \eqref{TVMprior} and \eqref{Tbondprior}. We now derive necessary and sufficient conditions for this to hold. We thus let $h(t,T)$ be given by \eqref{TVMprior} and define the $T$-bond prices by \eqref{Tbondprior} and the short rates by \eqref{eqSR}. Here is our first result, which shows that the drift $\alpha(t,T)$ is fully determined by the requirement that $\Q$ is a local martingale measure.

\begin{proposition}
$\Q$ is a local martingale measure if and only if the following \emph{discount drift condition} holds
\begin{equation}\label{eqDDC}
  \alpha(t,T) = h(t,T) h(t,t)  .
\end{equation}
\end{proposition}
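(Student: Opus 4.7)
\medskip

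The plan is to compute the $\Q$-dynamics of $P(t,T)$ from those of $h(t,\cdot)$, impose that the discounted bond price is drift-free, and then read off the condition on $\alpha$.

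First I would differentiate $P(t,T) = 1 - \int_t^T h(t,s)\,ds$ in $t$. The lower limit contributes $h(t,t)\,dt = r_t\,dt$, and the integrand contributes $\int_t^T dh(t,s)\,ds$. Using \eqref{TVMprior} and interchanging the $ds$-integral with the $dt$- and $dW_t$-integrals (this is the stochastic Fubini step, which is the main technical point, but is justified by the blanket regularity assumption made earlier in the section), I get
\[
 dP(t,T) = \left(h(t,t) - \int_t^T \alpha(t,s)\,ds\right)dt - \left(\int_t^T \sigma(t,s)\,ds\right) dW_t.
\]

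Next I would apply It\^o's product rule to the discounted bond price $\e^{-\int_0^t r_s\,ds}P(t,T)$. Since the discount factor has finite variation, its drift is
\[
 \e^{-\int_0^t r_s\,ds}\!\left(h(t,t) - \int_t^T \alpha(t,s)\,ds - r_t\, P(t,T)\right) dt,
\]
and the process is a local martingale if and only if this drift vanishes for all $t\le T$. Using $r_t=h(t,t)$ and $P(t,T)=1-\int_t^T h(t,s)\,ds$, the bracket simplifies to $\int_t^T \bigl(h(t,t)h(t,s)-\alpha(t,s)\bigr) ds$. So the local martingale property is equivalent to
\[
 \int_t^T \alpha(t,s)\,ds = \int_t^T h(t,t)\,h(t,s)\,ds \quad \text{for all } t\le T.
\]

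Finally I would differentiate this identity in $T$ (and, for the converse, integrate) to obtain the pointwise equivalent form $\alpha(t,T)=h(t,T)\,h(t,t)$, which is \eqref{eqDDC}. The main obstacle, as noted, is the stochastic Fubini interchange in step one; after that the argument is a direct HJM-style drift computation, and both directions fall out of the same equation.
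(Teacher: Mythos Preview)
Your proposal is correct and follows essentially the same route as the paper: compute $dP(t,T)$ from the $h$-dynamics via stochastic Fubini, impose that the discounted bond price has zero drift, obtain $\int_t^T \alpha(t,s)\,ds = h(t,t)\int_t^T h(t,s)\,ds$, and differentiate in $T$. The paper phrases the drift-matching step as ``the drift of $P(t,T)$ must equal $P(t,T)r_t\,dt$'' rather than writing out the It\^o product rule, but the computation and conclusion are identical.
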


\begin{proof}
The implied $T$-bond price dynamics is
\begin{equation}\label{Pimpdyn}
  dP(t,T) =    -d\bigg(\int_t^T h(t,s)\,ds\bigg) = r_t\,dt  - \int_t^T \alpha(t,s)\,ds\,dt - \int_t^T \sigma(t,s)\,ds\,dW_t .
\end{equation}
On the other hand, the discounted $T$-price process $\e^{-\int_0^t r_s\, ds} P(t,T)$ is a $\Q$-local martingale if and only if the drift of $P(t,T)$ equals $P(t,T) r_t\,dt$. Matching this with the drift in \eqref{Pimpdyn} gives $\int_t^T \alpha(t,s)\,ds = H(t,T)r_t$. Differentiating in $T$, we obtain \eqref{eqDDC}, which proves the proposition.
\end{proof}

Given the drift condition \eqref{eqDDC} and using the existence of a local martingale measure as synonym for the absence of arbitrage, we can paraphrase that the generic dynamics \eqref{TVMprior} for an arbitrage-free discount derivative model is of the form
 \begin{equation}\label{TVMNA}
 \begin{aligned}
   dh(t,T) &= h(t,T) h(t,t)\,dt + \sigma(t,T)\,dW_t ,\quad 0\le t\le T,\\
   h(0,T) &= h_0(T),
\end{aligned}
 \end{equation}
for an initial discount derivative curve $h_0$. Note that, in contrast to the HJM drift condition on the forward rates, see, e.g., \cite[Theorem 6.1]{fil_09}, the derivative drift condition \eqref{eqDDC} does not depend on the volatility $\sigma(t,T)$.

We next show that, for a given volatility $\sigma(t,T)$, the system of stochastic differential equations~\eqref{TVMNA} uniquely determines $h(t,T)$.

\begin{lemma}\label{lemunique}
 For a given volatility process $\sigma(t,T)$, $0\le t\le T<\infty$ and initial discount derivative curve $h_0$, there exists at most one solution $h(t,T)$, $0\le t\le T<\infty$ to \eqref{TVMNA}.
\end{lemma}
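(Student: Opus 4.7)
The plan is to exploit the fact that two solutions must share the same volatility, so their difference is a pathwise (Riemann) integral equation, and then to turn the self-referential nature of the drift (which involves the diagonal $h(t,t)=r_t$) into a scalar Volterra equation that can be killed by Gronwall.

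Concretely, let $h^{(1)}$ and $h^{(2)}$ be two solutions of \eqref{TVMNA} with the same $\sigma(t,T)$ and $h_0$, and set
\[
\Delta(t,T) := h^{(1)}(t,T) - h^{(2)}(t,T), \qquad \delta(t) := \Delta(t,t).
\]
Since the Brownian parts cancel, $\Delta(\cdot,T)$ is an absolutely continuous process with
\[
 d\Delta(t,T) = \bigl[h^{(1)}(t,T)h^{(1)}(t,t) - h^{(2)}(t,T)h^{(2)}(t,t)\bigr]dt = \bigl[\Delta(t,T)\,r^{(1)}_t + h^{(2)}(t,T)\,\delta(t)\bigr]dt,
\]
where $r^{(1)}_t = h^{(1)}(t,t)$, and $\Delta(0,T)=0$. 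For each fixed $T$ this is a pathwise linear ODE in $t$, whose variation-of-constants solution is
\[
 \Delta(t,T) = \int_0^t \e^{\int_s^t r^{(1)}_u\,du}\, h^{(2)}(s,T)\, \delta(s)\,ds, \qquad 0\le t\le T.
\]

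Next, specializing to $T=t$, which is admissible since $h^{(2)}(s,t)$ is defined for $s\le t$ under the assumed regularity, I obtain a linear Volterra equation
\[
 \delta(t) = \int_0^t K(t,s)\,\delta(s)\,ds, \qquad K(t,s) := \e^{\int_s^t r^{(1)}_u\,du}\, h^{(2)}(s,t).
\]
Because $h^{(1)}$ and $h^{(2)}$ are Itô processes with the assumed regularity, the kernel $K$ is pathwise locally bounded on $\{0\le s\le t\le T^\ast\}$ for every finite horizon $T^\ast$. Gronwall's inequality then forces $\delta\equiv 0$, and feeding this back into the expression for $\Delta(t,T)$ yields $\Delta\equiv 0$, which is the desired uniqueness.

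The main obstacle is the circularity that the drift depends on the solution's own diagonal $h(t,t)=r_t$, so one cannot simply invoke standard uniqueness for a linear SDE with prescribed coefficients. The Volterra-Gronwall step above is precisely what disentangles this coupling; everything else is a pathwise ODE manipulation made legitimate by the blanket regularity assumption stated in Section~\ref{secTVMM}.
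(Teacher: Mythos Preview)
Your proof is correct and follows essentially the same route as the paper's: the paper multiplies the difference by the integrating factor $\e^{-\int_0^t \tilde h(s,s)\,ds}$ to obtain $dq(t,T)=h(t,T)q(t,t)\,dt$, which is exactly your variation-of-constants formula rewritten, and then applies Gr\"onwall to the resulting scalar Volterra equation on the diagonal. The only cosmetic difference is that the paper absorbs the exponential into the unknown from the start, whereas you carry it in the kernel $K(t,s)$.
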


\begin{proof}
 Let $h$ and $\tilde h$ be solutions to \eqref{TVMNA}, then their discounted difference
\[ q(t,T):=\e^{-\int_0^t \tilde h(s,s)\,ds}\big(\tilde h(t,T)-  h(t,T)\big)\]
satisfies
\begin{equation}\label{ODEdq}
  dq(t,T) =   h(t,T) q(t,t) \,dt,\quad q(0,T)=0.
\end{equation}
Integrating gives, for $T=t$,
\[ q(t,t) = \int_0^t h(s,t) q(s,s)\,ds,\quad \text{for all $t\ge 0$.} \]
We claim that $q(t,t)=0$ for all $t\ge 0$, and by \eqref{ODEdq} thus $\tilde h = h$. Indeed, by contradiction assume that there exist $0\le t_0<t_1$ such that $q(s,s)=0$, for all $s\le t_0$, and $|q(t_1,t_1)|>0$. We let $\beta(s)$ be a positive process such that $|h(s,t)|\le \beta(s)$ for all $t_0\le s\le t\le t_1$. Then $|q(t,t)| \le  \int_{t_0}^t \beta(s) |q(s,s)| \,ds$ for all $t\in [t_0,t_1]$, and Gr\"onwall's inequality implies that $q(t,t)=0$ for all $t\in [t_0,t_1]$. This contradicts the assumption, whence the claim is proved.
\end{proof}

The problem remains that we still have no guarantee that bond prices in \eqref{Tbondprior} are positive. Our main result is the following theorem, which provides sufficient conditions such that the discount framework \eqref{TVMNA} defines a feasible arbitrage-free price system for $T$-bonds. It thus represents an alternative to the HJM framework of forward rates.

\begin{theorem}\label{thmhtT}
Let $h(t,T)$, $0\le t\le T<\infty$, be any solution to \eqref{TVMNA} such that $r_t = h(t,t)$ is well defined and
  \begin{equation}\label{assthmmart}
   \text{$\int_0^t \e^{-\int_0^s r_u\,du}\sigma(s,T)\,dW_s$, $0\le t\le T$, is a $\Q$-martingale, for all $T$.}
  \end{equation}
Then $\Q$ is a martingale measure and the implied $T$-bond prices in \eqref{Tbondprior} satisfy
\begin{equation}\label{1HEQeq}
  P(t,T)  =  \E_\Q\bigg[ \e^{-\int_t^T r_s\,ds} \mid\Fcal_t\bigg] ,
\end{equation}
and are thus positive, that is, \eqref{inthle1} holds, in particular.
\end{theorem}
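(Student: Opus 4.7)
The plan is to turn the drift condition \eqref{eqDDC} into an explicit SDE for $P(t,T)$, deduce a clean local-martingale representation of the discounted bond price, and then upgrade "local martingale" to "true martingale" via stochastic Fubini using precisely assumption \eqref{assthmmart}. Once the martingale property holds, the pricing formula \eqref{1HEQeq} and positivity follow from the terminal condition $P(T,T)=1$.

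First I would plug the drift $\alpha(t,s) = h(t,s)\,r_t$ into the already-derived identity \eqref{Pimpdyn}. Using $\int_t^T h(t,s)\,ds = H(t,T) = 1-P(t,T)$, the drift term collapses to $r_t\,dt - r_t(1-P(t,T))\,dt = r_t P(t,T)\,dt$, so that
\[
 dP(t,T) = r_t P(t,T)\,dt - \Sigma(t,T)\,dW_t, \qquad \Sigma(t,T):=\int_t^T \sigma(t,s)\,ds .
\]
An application of It\^o's product rule to $\widetilde P(t,T):=\e^{-\int_0^t r_s\,ds}P(t,T)$ then cancels the drift and yields
\[
 \widetilde P(t,T) = P(0,T) - \int_0^t \e^{-\int_0^u r_s\,ds}\,\Sigma(u,T)\,dW_u .
\]
This shows $\widetilde P(\cdot,T)$ is automatically a local martingale; what remains is the true martingale property.

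Next I would invoke a stochastic Fubini theorem to swap the $ds$-integral inside $\Sigma$ with the $dW$-integral. Writing $N^s_t := \int_0^t \e^{-\int_0^u r_v\,dv}\sigma(u,s)\,dW_u$, the Fubini swap gives
\[
 \int_0^t \e^{-\int_0^u r_s\,ds}\,\Sigma(u,T)\,dW_u = \int_0^T N^s_{t\wedge s}\,ds .
\]
By hypothesis \eqref{assthmmart}, for each fixed $s\in[0,T]$ the process $N^s_\cdot$ is a $\Q$-martingale on $[0,s]$; stopping it at $s$ simply extends it by the constant $N^s_s$ for $t>s$, so $t\mapsto N^s_{t\wedge s}$ remains a martingale. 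Integrating this family against $ds$ on $[0,T]$ and interchanging expectation and integral (justified by the same integrability that underlies \eqref{assthmmart}) shows that $\widetilde P(\cdot,T)$ is a genuine $\Q$-martingale, so $\Q$ is a martingale measure.

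Finally, from \eqref{Tbondprior} we read off $P(T,T) = 1-\int_T^T h(T,s)\,ds = 1$, and the martingale identity $\widetilde P(t,T) = \E_\Q[\widetilde P(T,T)\mid\Fcal_t]$ rearranges into \eqref{1HEQeq}. Positivity of $P(t,T)$, and hence \eqref{inthle1}, is immediate because $\e^{-\int_t^T r_s\,ds}>0$ almost surely. The only delicate point I anticipate is the rigorous application of stochastic Fubini and the subsequent interchange of $\E_\Q$ with $\int_0^T\cdot\,ds$; this is precisely where assumption \eqref{assthmmart}, together with implicit regularity on $\sigma$, is doing the work.
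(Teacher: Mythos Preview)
Your argument is correct, but it takes a different route from the paper's. The paper works one level down, with the \emph{discounted discount derivative} $M(t,T):=\e^{-\int_0^t r_s\,ds}h(t,T)$. Under \eqref{TVMNA} one has $dM(t,T)=\e^{-\int_0^t r_s\,ds}\sigma(t,T)\,dW_t$, so assumption \eqref{assthmmart} says \emph{exactly} that $M(\cdot,T)$ is a true martingale; this yields $h(t,T)=\E_\Q[\e^{-\int_t^T r_s\,ds}r_T\mid\Fcal_t]$, and a single ordinary Fubini (swap $\int_t^T\cdot\,du$ with $\E_\Q$) gives \eqref{1HEQeq}. Your path stays at the bond level and therefore needs an additional stochastic Fubini step to unravel $\Sigma(t,T)=\int_t^T\sigma(t,s)\,ds$ back into the family $N^s$ appearing in \eqref{assthmmart}; after that you still invoke the same ordinary Fubini to integrate the martingales $N^s_{t\wedge s}$ in $s$. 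So the paper's proof is slightly more economical---it reads the hypothesis off directly without the stochastic Fubini detour---while yours has the minor expository advantage of producing the martingale property of $\e^{-\int_0^\cdot r}\,P(\cdot,T)$ in one stroke, which is the literal content of ``$\Q$ is a martingale measure.'' Either way, the residual technical point is the same interchange of $\E_\Q$ with a $ds$-integral, which both proofs leave to standing regularity assumptions.
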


\begin{proof}
Let $\{ h(t,T) : 0\le t\le T<\infty\}$ be a solution to \eqref{TVMNA}. Denote $r_t = h(t,t)$ and set \[ M(t,T):=\e^{-\int_0^t r_s\,ds}h(t,T).\] Then $M(t,T)$ has zero drift and dynamics
\[ dM(t,T) = \e^{-\int_0^t r_s\,ds}\sigma(t,T)\,dW_t ,\]
and thus, by assumption \eqref{assthmmart}, is a $\Q$-martingale.  We obtain
\begin{equation}\label{htTeqX}
  h(t,T) = \e^{\int_0^t r_s\,ds} M(t,T) = \e^{\int_0^t r_s\,ds} \E_\Q[M(T,T)\mid \Fcal_t] = \E_\Q\Big[ \e^{-\int_t^T r_s\,ds}  r_T\mid\Fcal_t\Big] .
\end{equation}
Integrating over $T$ gives
\begin{equation} \label{HtTeqX}
H(t,T)=\int_t^T h(t,u)\,du = \E_\Q\bigg[ \int_t^T \e^{-\int_t^u r_s\,ds}  r_u\,du \mid\Fcal_t\bigg]=1- \E_\Q\Big[ \e^{-\int_t^T r_s\,ds}   \mid\Fcal_t\Big],
\end{equation}
as desired.
\end{proof}

The following remarks provide further details and discussion on the above discount framework.

\begin{remark}
The identities \eqref{htTeqX} and \eqref{HtTeqX} are of independent interest, and show the economic meaning of $h(t,T)$ and $H(t,T)$ as present values of future cash flows $r_T$ and $r_u$ for $u\in [t,T]$, respectively.
\end{remark}

\begin{remark}
The expression \eqref{Pimpdyn} shows that the induced volatility $v(t,T)$ of the $T$-bond returns is given by $P(t,T)v(t,T)=-\int_t^T \sigma(t,s)\,ds$.
\end{remark}

\begin{remark}
The equivalent physical measure $\Pa\sim\Q$ is related to $\Q$ by the market price of risk $\theta$ such that the Radon--Nikodym derivative satisfies $\E_\Q[\frac{d\Pa}{d\Q}\mid\Fcal_T] = \exp(\int_0^T \theta_t^\top dW_t -\frac{1}{2}\int_0^T \|\theta_t\|^2\,dt)$, for any time horizon $T>0$. This implies the $\Pa$-Brownian motion $dW^\Pa_t = dW_t -\theta_t \,dt$. Hence the dynamics under $\Pa$ of $h(t,T)$ is $dh(t,T) = (h(t,T)h(t,t) + \sigma(t,T)\theta_t )\,dt + \sigma(t,T)\, dW^\Pa_t $.
\end{remark}

\begin{remark}\label{remTP}
The proofs of Lemma~\ref{lemunique} and Theorem~\ref{thmhtT} rely on technical properties, e.g., that $T\mapsto h(t,T)$ is locally bounded in $T\ge t$ a.s., or that the order of integration and expectation can be changed. Such properties can be asserted by imposing sufficient technical assumptions.
\end{remark}

\begin{remark}\label{remquadrdrift}
Existence of a solution for the system of SDEs \eqref{TVMNA} is an open problem. Existence could be an issue in view of the quadratic drift \eqref{eqDDC}, which may cause explosion in finite time. Example \ref{DFMex1} and Section \ref{sswellbeh} below give feasible specifications. A natural approach for a systematic study is to state \eqref{TVMNA} as a stochastic partial differential equation for the discount derivative curve $\psi_t(x):=h(t,t+x)$ in an appropriate function space $\Hcal$. Such an SPDE is of the form
\begin{equation}\label{SPDEeq}
  d\psi_t(x)  = \big(\partial_x\psi_t(x) + \psi_t(x)\psi_t(0)\big)dt + B(\psi_t)\,dW_t ,
\end{equation}
for some appropriate volatility operator $B:\Hcal\to \Hcal^n$, so that $\sigma(t,t+\cdot)=B(\psi_t)$. An example of an appropriate function space is the weighted Sobolev space $\Hcal_w$ consisting of weakly differentiable functions $\psi:[0,\infty)\to\R$ with $\|\psi\|_w^2:=\int_0^\infty  \psi'(x)^2 w(x)\,dx<\infty$ and $\psi(\infty)=0$, for some increasing and continuously differentiable weight function $w$, such that $\int_0^\infty w^{-1/3}(x)\,dx<\infty$. E.g., $w(x)=\e^{\alpha x}$ for some $\alpha>0$. This is similar to the space $H_w$ introduced in \cite[Chapter 5]{fil_01}. It can be shown, as in \cite[Equation (5.7)]{fil_01}, that the $L^1$-norm of any $\psi\in\Hcal_w$ is bounded by $\int_0^\infty |\psi(x)|\,dx\le C_w \|\psi\|_w$, for some finite constant $C_w$. It can further be shown, as in \cite[Theorem 5.1.1]{fil_01}, that the differential operator $\partial_x$ generates a strongly continuous semigroup on $\Hcal_w$. Hence one can study existence and uniqueness of (local) mild and weak solutions to \eqref{SPDEeq} in the spirit of \cite{dap_zab_92}. In fact, uniqueness follows as soon as the volatility operator $B(\psi_t)$ is Lipschitz continuous in $\psi_t$, see \cite[Corollary 2.4.1]{fil_01}. This is a direct improvement of Lemma \ref{lemunique}, since there we assumed that the volatility process $\sigma(t,T)$ is given as exogenous and does not depend on $\psi_t$. Similarly, global existence would follow if one could show that any local weak solution of \eqref{SPDEeq} with $\|\psi_0\|_w< C_w^{-1}$ remains bounded, $\|\psi_t\|_w< C_w^{-1}$ for all $t\ge 0$. Combined with the above $L^1$-bound this would imply that $h(t,s)=\psi_t(s-t)$ satisfies \eqref{inthle1}, so that the induced bond prices \eqref{Tbondprior} are positive.

For the special, deterministic, case where $B=0$, the unique local solution to \eqref{SPDEeq} is given by
\[  \psi_t(x) = \frac{\psi_0(t+x)}{1-\int_0^t \psi_0(s)\,ds}.\]
This solution does not explode in finite time if and only if the initial curve satisfies \eqref{inthle1}, that is, $\int_0^t \psi_0(s)\,ds<1$ for all finite $t$. In this case, it follows easily that also $\psi_t$ satisfies \eqref{inthle1}, as $\int_0^{T-t} \psi_t(x)\,dx = (\int_0^T \psi_0(s)\,ds - \int_0^t \psi_0(s)\,ds)(1-\int_0^t \psi_0(s)\,ds)^{-1} < 1$ for all finite $T\ge t$.
\end{remark}

The following example illustrates that the discount framework \eqref{TVMNA} admits feasible solutions.

\begin{example}\label{DFMex1}
A simple toy model specification is $h(t,T)=\phi(T-t) r_t$, for some deterministic function $\phi$ with $\phi(0)=1$, and where the short rate $r_t$ follows an It\^o process of the form $dr_t = \mu_t\,dt + \nu_t\,dW_t$. The induced dynamics of $h(t,T)$ is
\[ dh(t,T) = (\phi(T-t)\mu_t -\phi'(T-t) r_t)\,dt + \phi(T-t)\nu_t\,dW_t.\]
The drift condition \eqref{eqDDC} now reads as consistency condition
\begin{equation}\label{eqXX1}
 \phi(T-t)\mu_t -\phi'(T-t) r_t = \phi(T-t)r_t^2,
\end{equation}
which does not depend on the volatility process $\nu_t$, whereas the volatility in \eqref{TVMNA} is simply induced as $\sigma(t,T)= \phi(T-t)\nu_t$. This property holds more general in affine discount term structure models, see Remark~\ref{remSPAN} below.

For $T=t$, as $\phi(0)=1$, we obtain the short rate drift $\mu_t = (r_t+\phi'(0))r_t $ and volatility $\nu_t=\sigma(t,t)$. Assume that $\phi'(0)=-\theta$ is negative, for some parameter $\theta>0$, and assume that $\sigma(t,t)\to 0$ fast enough as $r_t\to 0$ and $r_t\to \theta$, respectively, then the risk-neutral dynamics
\[ dr_t = -(\theta-r_t)r_t\,dt + \sigma(t,t)\,dW_t \]
is well-behaved with values in $[0,\theta]$, for any initial value $r_0\in [0,\theta]$. This short rate dynamics is reminiscent of the linear-rational framework \cite[Section IV.E]{fil_lar_tro_17}.

Plugging $\mu_t=-(\theta-r_t)r_t$ back in \eqref{eqXX1}, we obtain $-\phi(T-t)\theta r_t -\phi'(T-t)r_t=0$, which is equivalent to $\phi(s)=\e^{-\theta s}$. This results in a term structure of $T$-discounts of
\[ H(t,T) = r_t \int_0^{T-t}\phi(s)\,ds =  \big(1- \e^{-(T-t)\theta}\big)\frac{r_t}{\theta}\]
and thus a term structure of $T$-bonds of
\[ P(t,T) = 1 - \big(1- \e^{-(T-t)\theta}\big)\frac{r_t}{\theta} .\]
As a technical note, from above we have $r_t$ takes values in $[0,\theta]$, which implies that the discount curve $T\mapsto P(t,T)$ is decreasing and $P(t,T)>0$ for all finite maturities $T>t$.
\end{example}

\section{Discount factor models}

We now elaborate on discount factor models, extending Example \ref{DFMex1}. We first derive the consistency conditions for a general factor model. We then study in more detail the affine discount factor models.

\subsection{General factor models}
We study consistent discount factor models of the form
  \[ h(t,T) = \phi(T-t,Z_t) \]
  for some function $\phi:[0,\infty)\times \Zcal\to \R$, where $\Zcal\subseteq\R^d$ is some state space with non-empty interior, and the factor $Z_t$ is some $\Zcal$-valued diffusion process, with dynamics
\[ dZ_t = \mu(Z_t)\,dt + \nu(Z_t)\,dW_t ,\]
for a drift function $\mu:\Zcal\to \R^d$ and volatility function $\nu:\Zcal\to \R^{d\times n}$. We denote the corresponding diffusion function by $c(\cdot)=\nu(\cdot)^\top \nu(\cdot)$.
The induced dynamics of $h(t,T)$ is
\begin{align*}
  dh(t,T) &=\Big(-\partial_1 \phi(T-t,Z_t) +\mu(Z_t)^\top\nabla_z\phi(T-t,Z_t) +\frac{1}{2}\trace[c(Z_t)\nabla_z^2 \phi(T-t,Z_t)]\Big)\,dt \\
  &\quad + \nabla_z\phi(T-t,Z_t)^\top\nu(Z_t)\,dW_t.
\end{align*}
Matching the drift term with the arbitrage-free dynamics \eqref{TVMNA} point-wise gives the following consistency equation
\begin{equation}
 -\partial_x \phi(x,z) +\mu(z)^\top\nabla_z\phi(x,z) +\frac{1}{2}\trace[c(z)\nabla_z^2 \phi(x,z)] = \phi(x,z)\phi(0,z) ,\label{conseq}
\end{equation}
whereas the induced volatility is
\begin{equation}\label{dervolindu}
 \sigma(t,T)= \nabla_z\phi(T-t,Z_t)^\top\nu(Z_t).
\end{equation}

\subsection{Affine discount term structure models}
We now assume an affine term structure
\begin{equation}\label{ATS}
\phi(x,z) = \phi_0(x) + \sum_{i=1}^d \phi_i(x) z_i ,
\end{equation}
for some functions $\phi_j:[0,\infty)\to \R$, $j=0,1,\dots,d$. Plugging in the affine term structure \eqref{ATS} in the consistency equation \eqref{conseq} gives
\begin{equation}\label{ATScons}
   -\phi_0'(x) -\sum_{i=1}^d \phi_i'(x)z_i + \sum_{i=1}^d \phi_i(x)\mu_i(z) = \phi_0(x)\gamma_0 +\sum_{j=1}^d \big(\phi_0(x)\gamma_j + \phi_j(x)\gamma_0\big) z_j + \sum_{j,k=1}^d \phi_j(x)\gamma_k z_j z_k,
\end{equation}
where we denote
\[  \gamma_j := \phi_j(0),\quad j=0,1,\dots,d.\]

\begin{remark}\label{remSPAN}
 We note that the consistency condition \eqref{ATScons} does not depend on the diffusion matrix $c(z)$ of the factor process $Z_t$. This is in contrast to the non-linear case as seen in \eqref{conseq} where $c(z)$ shows up. In other words, an affine discount term structure leaves the underlying volatility unspanned. This is in contrast to affine models of the forward rates, see \cite{fil_tro_sta_19}, and reminiscent of the linear-rational framework, see \cite[Section I]{fil_lar_tro_17}. In fact, it follows directly from~\eqref{ATS} that $T$-bond prices become affine expressions in $Z_t$. We derive the explicit expressions in~\eqref{bondlinearexp} below.
\end{remark}

We henceforth assume that the functions
\begin{equation}\label{linass}
  \text{$\phi_1,\dots,\phi_d$ are linearly independent.}
\end{equation}
Equation \eqref{ATScons} then implies that every drift function $\mu_i(z)$ is a quadratic polynomial in $z$,
\begin{equation}\label{driftZquadprior}
 \mu_i(z) = b_i + \sum_{j=1}^d \beta_{ij}z_j + \sum_{j,k=1}^d B_{i,jk} z_j z_k
\end{equation}
for some coefficients $b_i$, $\beta_{ij}$, $B_{i,jk}$. Plugging \eqref{driftZquadprior} in \eqref{ATScons} and matching coefficients of same order in $z$ gives
\begin{align}
  -\phi_0'(x) + \sum_{i=1}^d b_i \phi_i(x) &= \phi_0(x)\gamma_0 ,\label{Acc1}\\
  -\phi_j'(x) + \sum_{i=1}^d \beta_{ij} \phi_i(x) &= \phi_0(x)\gamma_j + \phi_j(x)\gamma_0,\quad j=1,\dots,d, \label{Acc2}\\
   \sum_{i=1}^d (B_{i,jk}+B_{i,kj}) \phi_i(x)&=  \phi_j(x)\gamma_k+\phi_k(x)\gamma_j ,\quad j,k=1,\dots,d. \label{Acc3}
\end{align}

We arrive at the following result, where we write $\bar Z_t = (1,Z_{1,t},\dots,Z_{d,t})^\top$ for the extended factor process including the constant 1, $e_0=(1,0,\dots,0)^\top$, $\bar\phi(x) := (\phi_0(x),\dots,\phi_d(x))^\top$, and $\bar\gamma:=(\gamma_0,\dots,\gamma_d)^\top =\bar\phi(0)$.

\begin{proposition}
Assume \eqref{linass}. The factor process $Z_t$ in an arbitrage-free affine discount term structure model of the form \eqref{ATS} has a quadratic drift of the form
\begin{equation}\label{driftZquadcons}
 \mu_i(z) = b_i + \sum_{j=1}^d \beta_{ij}z_j +  z_i \sum_{j=1}^d  \gamma_j  z_j,\quad i=1,\dots, d,
\end{equation}
for some coefficients $b_i$, $\beta_{ij}$, and $\gamma_j$. The functions $\phi_i$ in turn are given by
\begin{equation}\label{phisol}
   \bar\phi(x) = \e^{Ax}\bar\gamma ,
\end{equation}
where $A$ is defined as the $(d+1)\times (d+1)$ matrix on the right hand side of \eqref{ODEphifull} below. The $T$-discounts and short rates are linear in $\bar Z_t$,
\begin{align}
 H(t,T) &=  1- e_0^\top \e^{A^\top (T-t)} \bar Z_t,\label{TdiscountATS}\\
 r_t &= \bar\gamma^\top \bar Z_t.\label{srATS}
\end{align}

\end{proposition}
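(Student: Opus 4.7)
My plan is to unpack the three consistency equations \eqref{Acc1}--\eqref{Acc3} in order and read off the four claims. The work is essentially linear algebra once the quadratic ansatz \eqref{driftZquadprior} is in hand.

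\textbf{Step 1: pin down the quadratic coefficients.} Since $z_jz_k = z_kz_j$, I may assume without loss of generality that $B_{i,jk}$ is symmetric in $(j,k)$, so $B_{i,jk}+B_{i,kj}=2B_{i,jk}$. Equation \eqref{Acc3} then reads $\sum_i 2B_{i,jk}\phi_i(x) = \phi_j(x)\gamma_k+\phi_k(x)\gamma_j$ for all $x$, and the linear independence assumption \eqref{linass} allows me to match coefficients on the basis $\phi_1,\dots,\phi_d$, yielding $2B_{i,jk} = \gamma_k\delta_{ij}+\gamma_j\delta_{ik}$. Substituting this back into the quadratic piece of \eqref{driftZquadprior} collapses the double sum to $\sum_{j,k}B_{i,jk}z_jz_k = \tfrac{1}{2}(z_i\sum_k\gamma_k z_k + z_i\sum_j\gamma_j z_j) = z_i\sum_j\gamma_j z_j$, which is exactly the quadratic term in \eqref{driftZquadcons}.

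\textbf{Step 2: solve the ODE for $\bar\phi$.} Equations \eqref{Acc1} and \eqref{Acc2} can be rewritten as $\phi_0'(x) = -\gamma_0\phi_0(x)+\sum_i b_i\phi_i(x)$ and $\phi_j'(x) = -\gamma_j\phi_0(x)-\gamma_0\phi_j(x)+\sum_i\beta_{ij}\phi_i(x)$, i.e.\ a linear first-order system $\bar\phi'(x) = A\bar\phi(x)$ with initial value $\bar\phi(0)=\bar\gamma$. This is precisely the display \eqref{ODEphifull} the proposition refers to, and the unique solution is $\bar\phi(x)=\e^{Ax}\bar\gamma$, giving \eqref{phisol}. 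From the explicit form, the first column of $A$ is $(-\gamma_0,-\gamma_1,\dots,-\gamma_d)^\top$, so I record the identity $A e_0 = -\bar\gamma$ for later use.

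\textbf{Step 3: short rate.} From \eqref{eqSR} and \eqref{ATS}, $r_t = h(t,t) = \phi(0,Z_t) = \gamma_0 + \sum_{i=1}^d\gamma_i Z_{i,t} = \bar\gamma^\top\bar Z_t$, which is \eqref{srATS}.

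\textbf{Step 4: the $T$-discount.} Write $H(t,T)=\int_t^T\phi(s-t,Z_t)\,ds = \int_0^{T-t}\bar Z_t^\top \bar\phi(u)\,du = \bar Z_t^\top\!\left(\int_0^{T-t}\e^{Au}du\right)\!\bar\gamma$. Using $\bar\gamma = -Ae_0$ from Step 2, the integrand becomes $\e^{Au}\bar\gamma = -\tfrac{d}{du}(\e^{Au}e_0)$, so the bracket equals $e_0 - \e^{A(T-t)}e_0$. Hence $H(t,T) = \bar Z_t^\top e_0 - \bar Z_t^\top\e^{A(T-t)}e_0 = 1 - e_0^\top\e^{A^\top(T-t)}\bar Z_t$, which is \eqref{TdiscountATS}; the bracket $\bar Z_t^\top e_0 = 1$ because the zeroth component of $\bar Z_t$ is $1$ by definition.

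The only mild subtlety, and the main thing to get right, is Step 1: keeping the symmetric-in-$(j,k)$ convention straight and invoking \eqref{linass} to separate the quadratic coefficients from $\phi_0$-contributions. Everything after that is a direct matrix-exponential computation, anchored by the identity $Ae_0=-\bar\gamma$ that makes the integration in Step 4 telescope.
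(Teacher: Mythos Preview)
Your proof is correct and follows the same route as the paper: extract the $B_{i,jk}$ from \eqref{Acc3} via \eqref{linass}, recognise \eqref{Acc1}--\eqref{Acc2} as the linear ODE $\bar\phi'=A\bar\phi$, and use the identity $Ae_0=-\bar\gamma$ to integrate $\bar\phi$ and obtain \eqref{TdiscountATS}. The only cosmetic difference is that you symmetrise $B_{i,jk}$ in $(j,k)$ at the outset, whereas the paper keeps a non-symmetric representation ($B_{i,jk}=\gamma_k$ if $i=j$, $\gamma_j$ if $i=k$, $0$ otherwise); both yield the same quadratic term $z_i\sum_j\gamma_j z_j$.
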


\begin{proof}
In view of \eqref{linass}, condition \eqref{Acc3} is equivalent to
\begin{equation*}
  B_{i,jk}   = \begin{cases}
    \gamma_k,&\text{if $i=j$,}\\
    \gamma_j,&\text{if $i=k$,}\\
    0,&\text{otherwise.}
  \end{cases}
\end{equation*}
Hence the drift \eqref{driftZquadprior} takes the form \eqref{driftZquadcons}.

Equations \eqref{Acc1}--\eqref{Acc2} are equivalent to the linear system of ordinary differential equations
\begin{equation}\label{ODEphifull}
  \begin{pmatrix}
  \phi_0'(x) \\ \phi_1'(x) \\ \vdots \\ \phi_i'(x) \\ \vdots \\ \phi_d'(x)
\end{pmatrix}
=\underbrace{\begin{pmatrix}
  -\gamma_0 & b_1 & b_2 & \dots  & b_{d-1}  & b_d \\
   -\gamma_1 & \beta_{11} -\gamma_0 & \beta_{21} & \dots & \beta_{d-1,1} & \beta_{d1} \\
 \vdots & \vdots & \ddots &  & & \vdots \\
   -\gamma_i & \beta_{1i}   &  & \beta_{ii} -\gamma_0 &  & \beta_{di} \\
 \vdots & \vdots &  & & \ddots & \vdots \\
  -\gamma_d & \beta_{1d}   & \beta_{2d} & \dots & \beta_{d-1,d} & \beta_{dd} -\gamma_0
  \end{pmatrix}}_{=:A}
  \begin{pmatrix}
  \phi_0(x) \\ \phi_1(x) \\ \vdots \\ \phi_i(x) \\ \vdots \\ \phi_d(x)
\end{pmatrix}.
\end{equation}
The system of ODEs \eqref{ODEphifull} can be written in compact form as
\begin{equation}\label{ODEphi}
\begin{aligned}
  \bar{\phi}'(x) &= A \bar\phi(x),\\
  \bar\phi(0)&= \bar\gamma.
\end{aligned}
\end{equation}
The solution of \eqref{ODEphi} is \eqref{phisol}.

Note that
\begin{equation}\label{eqbargamma}
 \bar\gamma = -A e_0.
\end{equation}
Hence $\bar\phi(x) =  - \e^{Ax}A e_0 =-\partial_x \e^{Ax} e_0$, and the primitive functions
\[ \Phi_j(x) = \int_0^x\phi_j(u)\,du ,\quad j=0,\dots,d,\]
are given by
\[ \bar\Phi(x)=(\Phi_0(x),\dots,\Phi_d(x))^\top = \big( I_{d+1} - \e^{Ax}   \big)e_0.\]

The expressions \eqref{TdiscountATS} and \eqref{srATS} for the $T$ discount $H(t,T) = \bar\Phi(T-t)^\top\bar Z_t$ and short rates $r_t$ now follow by elementary calculus.
\end{proof}

As announced in Remark \ref{remSPAN}, we conclude from \eqref{TdiscountATS} that the induced $T$-bond prices are linear in $\bar Z_t$,
\begin{equation}\label{bondlinearexp}
  P(t,T) =   e_0^\top \e^{A^\top (T-t)} \bar Z_t.
\end{equation}
Straightforward differentiation using \eqref{eqbargamma} shows that forward rates are linear-rational in $\bar Z_t$,
\begin{equation*}
  f(t,T) =   \frac{\bar\gamma^\top \e^{A^\top (T-t)} \bar Z_t }{ e_0^\top \e^{A^\top (T-t)} \bar Z_t}.
\end{equation*}

\subsection{Well behaved factor processes}\label{sswellbeh}

As noted in Remark~\ref{remquadrdrift}, existence of global factor processes may be an issue in view of their quadratic drift \eqref{driftZquadcons}, which may cause explosion in finite time. We discuss here some well behaved specifications.

More specifically, we show that there exists factor processes $Z_t$, with quadratic drift of the form \eqref{driftZquadcons}, taking values in the half-open solid simplex
\begin{equation}\label{Zcalass}
  \Zcal:=\bigg\{ z\in [0,1] : \sum_{i=1}^d z_i < 1\bigg\}.
\end{equation}
This can always be achieved, as we show now. Thereto, we first specify a diffusion $U_t$ with values in $[0,\infty)^d$ and then apply the diffeomorphism $G:[0,\infty)^d\to \Zcal$ defined by
\[ G_i(u) := u_i \bigg(1+\sum_{j=1}^d u_j\bigg)^{-1},\quad i=1,\dots, d, \]
with inverse $G^{-1}_i (z) = z_i   (1-\sum_{j=1}^d z_j  )^{-1}$.

The dynamics of $U_t$ could be of the form
\begin{equation}\label{Udyn}
  dU_{t,i} = \bigg( \kappa_i U_{t,i} + \theta_i\bigg(1+\sum_{j=1}^d U_{t,j}\bigg)\bigg)dt + q_i\sqrt{U_{t,i}\bigg(1+\sum_{j=1}^d U_{t,j}\bigg)}dW_{t,i},
\end{equation}
for some parameters $\kappa_i\in\R$, $\theta_i,q_i\ge 0$.

\begin{lemma}\label{lemUZdyn}
There exists a $[0,\infty)^d$-valued weak solution $U_t$ to \eqref{Udyn}. The drift of the transformed process $Z_{t}:=G(U_t)$  is quadratic in $Z_t$ of the form \eqref{driftZquadcons}, and assumption \eqref{assthmmart} holds, so that Theorem~\ref{thmhtT} applies.
\end{lemma}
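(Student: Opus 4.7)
The plan is to address the three claims---existence of $U$, the quadratic form of the drift of $Z = G(U)$, and the martingale condition---by reducing the existence problem to one on the compact simplex via $G$. Rather than work with $U$ directly on the unbounded set $[0,\infty)^d$, I would construct the image process $Z_t$ on $\bar\Zcal$ and then recover $U_t := G^{-1}(Z_t)$. Using the identities $u_i = z_i/y$ and $1+\sum_j u_j = 1/y$, where $y := 1-\sum_i z_i$, It\^o's formula applied to $G$ translates \eqref{Udyn} into an SDE for $Z$ whose coefficients extend continuously to $\bar\Zcal$: the drift of $Z_i$ becomes a quadratic polynomial in $z$, and the coefficient of $dW_{t,j}$ in $dZ_{t,i}$ takes the form $q_j(\delta_{ij}-z_i)\sqrt{z_j}$, which vanishes on every coordinate face $\{z_j=0\}$. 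Inward-pointing drift at $\{z_i=0\}$, namely $b_i=\theta_i\ge 0$, combined with standard martingale-problem arguments for polynomial diffusions on the simplex yields a weak solution $Z\in\bar\Zcal$. To conclude that $U = G^{-1}(Z) \in [0,\infty)^d$, I need $Y_t := 1-\sum_i Z_{t,i} > 0$ a.s., which I would verify from
\[
 dY_t = Y_t\Big[-\sum_l\theta_l + \sum_j(q_j^2-\kappa_j)Z_{t,j}\Big]dt - Y_t\sum_j q_j\sqrt{Z_{t,j}}\,dW_{t,j},
\]
noting that $\log Y_t$ then has bounded drift and bounded diffusion on $\bar\Zcal$, which prevents $Y_t$ from reaching $0$ in finite time.

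The second claim reduces to a direct It\^o computation: matching the quadratic expansion of the drift of $Z$ with \eqref{driftZquadcons} identifies $b_i = \theta_i$, $\beta_{ii} = \kappa_i - q_i^2 - \sum_l\theta_l$, $\beta_{ij} = 0$ for $i\neq j$, and $\gamma_j = q_j^2 - \kappa_j$. For the martingale condition \eqref{assthmmart}, the affine structure \eqref{ATS} combined with \eqref{dervolindu} gives $\sigma(t,T) = (\phi_1(T-t),\dots,\phi_d(T-t))\,\nu(Z_t)$, where the $\phi_i$ are deterministic and continuous in $t$ and $\nu(Z_t)$ is bounded on the compact simplex $\bar\Zcal$. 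Since $r_t = \bar\gamma^\top\bar Z_t$ is also bounded on $\bar\Zcal$, the integrand $\e^{-\int_0^s r_u\,du}\sigma(s,T)$ is bounded on $[0,T]$, and the stochastic integral is therefore a true $\Q$-martingale.

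The main obstacle is invoking an appropriate existence theorem for the polynomial diffusion $Z$ on $\bar\Zcal$---where the diffusion degenerates on the coordinate faces---and verifying that $Y_t$ remains strictly positive so that $U = G^{-1}(Z)$ is well-defined in $[0,\infty)^d$. Once the correct state-space transformation via $G$ is set up, the drift-matching and martingale assertions are essentially verifications.
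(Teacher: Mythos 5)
Your computations are correct and your overall reading of what needs to be shown matches the paper's proof: the It\^o transfer of \eqref{Udyn} through $G$, the identification $b_i=\theta_i$, $\beta_{ij}=0$ for $i\neq j$, $\gamma_j=q_j^2-\kappa_j$, and the boundedness argument for \eqref{assthmmart} (bounded $r_t$ and $\nu(Z_t)$ because $Z_t$ lives in the bounded set \eqref{Zcalass}, deterministic continuous $\phi_i$ via \eqref{dervolindu}) are exactly the paper's second and third steps. (Your $\beta_{ii}=\kappa_i-q_i^2-\sum_l\theta_l$ differs from the paper's displayed $\kappa_i+q_i^2-\sum_l\theta_l$ in the sign of $q_i^2$; since $d\langle U_i,V^{-1}\rangle_t=-q_i^2Z_{t,i}\,dt$ with $V_t=1+\sum_jU_{t,j}$, your sign appears to be the right one, and in any case the discrepancy is immaterial because $\beta_{ii}$ is a free coefficient in \eqref{driftZquadcons}.)

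Where you genuinely depart from the paper is the existence step, and that is also where your argument has a gap. The paper takes existence of a $[0,\infty)^d$-valued weak solution $U$ to \eqref{Udyn} directly from Filipovi\'c--Larsson: \eqref{Udyn} is a bona fide polynomial diffusion on the orthant --- linear drift and a diagonal diffusion matrix with quadratic entries $c_{ii}(u)=q_i^2u_i(1+\sum_ju_j)$ --- so the cited existence theorem applies verbatim, and then $V_t\ge 1$ holds for free, so $Z=G(U)$ is automatically well defined with values in $\Zcal$. You instead propose to construct $Z$ directly on the closed simplex by ``standard martingale-problem arguments for polynomial diffusions on the simplex'' and then recover $U=G^{-1}(Z)$. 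But $Z$ is \emph{not} a polynomial diffusion: its drift is quadratic and its diffusion matrix has cubic entries (e.g.\ $c_{ii}(z)=q_i^2(1-z_i)^2z_i+z_i^2\sum_{k\neq i}q_k^2z_k$), so the degree-two polynomial-diffusion existence machinery you implicitly invoke does not apply to $Z$. You would need a general existence result for degenerate diffusions with continuous coefficients on a compact set, together with a boundary-invariance check on each face --- doable, but it must be named and verified, and it is precisely the work the paper sidesteps by setting up the model on the orthant where the citation bites. Your additional argument that $Y_t=1-\sum_iZ_{t,i}$ stays strictly positive (the dynamics of $Y$ you write down are correct, and the bounded-coefficient $\log Y$ argument works) is fine, but it is an extra burden created by your route that the paper's route avoids entirely.
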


\begin{proof}
Existence of a $[0,\infty)^d$-valued weak solution $U_t$ to \eqref{Udyn} follows from \cite[Theorem 5.3 and Proposition 6.4]{fil_lar_16}.

Now define $V_t:= 1+\sum_{j=1}^d U_{t,j}$, so that $Z_{t,i}=U_{t,i}(V_t)^{-1}$, and denote $\theta_V:=\sum_{j=1}^d \theta_j$. Using It\^o calculus, we obtain
 \begin{align*}
   dU_{t,i} & = \big( \kappa_i U_{t,i} + \theta_i V_t\big)\, dt + q_i\sqrt{U_{t,i}} \sqrt{V_t}\,dW_{t,i},
 \end{align*}
and thus
 \begin{align*}
   dV_t & = \sum_{j=1}^d dU_{t,j} = \bigg( \sum_{j=1}^d \kappa_j U_{t,j} + \theta_V V_t\bigg) dt + \sum_{j=1}^d q_j\sqrt{U_{t,j}} \sqrt{V_t}\,dW_{t,j},
 \end{align*}
 and
 \begin{align*}
   d(V_t)^{-1} &= -(V_t)^{-2} dV_t + (V_t)^{-3} d\langle V,V\rangle_t \\
    & =-(V_t)^{-1}\bigg( \sum_{j=1}^d \kappa_j Z_{t,j} + \theta_V  \bigg) dt -(V_t)^{-1} \sum_{j=1}^d q_j\sqrt{Z_{t,j}} \,dW_{t,j} + (V_t)^{-1}\sum_{j=1}^d q_j^2  Z_{t,j}\,dt.
 \end{align*}
Integration by parts thus gives
\begin{equation}\label{SDEZexi}
 \begin{aligned}
  dZ_{t,i} & = U_{t,i}\,d(V_t)^{-1} + (V_t)^{-1}\,dU_{t,i} + d\langle U_{i},V^{-1}\rangle_t \\
   & = -Z_{t,i}   \bigg( \sum_{j=1}^d \kappa_j Z_{t,j} + \theta_V  \bigg) dt -Z_{t,i} \sum_{j=1}^d q_j\sqrt{Z_{t,j}} \,dW_{t,j} + Z_{t,i}   \sum_{j=1}^d q_j^2 Z_{t,j}  \,dt \\
   &\quad + \big( \kappa_i Z_{t,i} + \theta_i \big)\, dt +  q_i\sqrt{Z_{t,i}}  \,dW_{t,i}  + q_i^2 Z_{t,i}\,dt \\
   &= \bigg( \theta_i + \big(-\theta_V+\kappa_i+q_i^2    \big)Z_{t,i} + Z_{t,i}\sum_{j=1}^d \big(-\kappa_j+q_j^2\big) Z_{t,j}  \bigg) dt \\
   &\quad + q_i \big(1-Z_{t,i} \big)\sqrt{Z_{t,i}}\,dW_{t,i} - Z_{t,i}\sum_{j\neq i}  q_j\sqrt{Z_{t,j}} \,dW_{t,j}.
\end{aligned}
\end{equation}
This shows that the drift function of $Z_t$ is of the form \eqref{driftZquadcons}.

The induced short rate process $r_t$ is linear in $Z_t$ given by \eqref{srATS} and thus uniformly bounded. The volatility function of $Z_t$ is also uniformly bounded, and hence the induced discount derivative volatility given in \eqref{dervolindu}. Hence assumption \eqref{assthmmart} holds, which completes the proof.
 \end{proof}

The existence result in Lemma \ref{lemUZdyn} is only partially an answer to the question about the global existence of SPDE \eqref{SPDEeq} in Remark~\ref{remquadrdrift}. In fact, the affine discount term structure model \eqref{ATS} generates only discount derivative curves of the form $\psi_t(x) = \phi_0(x) + \sum_{i=1}^d \phi_i(x) Z_{t,i}$, which lie in a $d$-dimensional affine subspace, say $\Acal$, of $\Hcal$. In view of \eqref{linass}, there exist points $0\le x_1<x_2<\cdots<x_d$ such that the factor $Z_t$ is an affine function of the $d$ discount derivative values $(\psi_t(x_1),\dots,\psi_t(x_d))$. Combining this with \eqref{dervolindu} and \eqref{SDEZexi}, we easily see that the volatility operator $B(\psi_t)$ in \eqref{SPDEeq} is a (non-Lipschitz continuous) function of the values $(\psi_t(x_1),\dots,\psi_t(x_d))$ when restricted to $\psi_t\in\Acal$. However, it is not clear how to extend $B$ beyond $\Acal$ so a solution to \eqref{SPDEeq} still exists outside $\Acal$. And uniqueness remains a problem anyway because of the lack of Lipschitz continuity at $Z_{t,i}=0$ of the volatility of $Z_t$ given in \eqref{SDEZexi}, as the following remark shows.

\begin{remark}\label{remunique}
While pathwise uniqueness for \eqref{Udyn} is proved in \cite[Theorem 4.3]{fil_lar_16} in dimension $d=1$, it remains an open problem in higher dimensions $d>1$. I conjecture that pathwise uniqueness for \eqref{Udyn} can be proved along similar arguments as used in \cite{yam_wat_71}. However, as discussed in \cite[Section 4]{fil_lar_16}, the problem is far from trivial. E.g., the results in \cite{yam_wat_71} do not directly apply here, as they assume that the $i$th element of the diagonal diffusion matrix only depends on the $i$th coordinate of the process.
\end{remark}

\section{Conclusion}
An arbitrage-free dynamic discount model defines an arbitrage-free price system for bonds. Modeling discount derivatives thus provides a valuable alternative to modeling forward rates. Of particular interest are affine discount term structure models, for which I provide feasible specifications.

The paper identifies various open problems outlined in Remarks \ref{remTP}, \ref{remquadrdrift}, and \ref{remunique}, which point to directions for further research. Other research directions include the implementation of discount models for pricing and hedging interest rate derivatives.

\bibliographystyle{apalike}

\bibliography{bib}

\end{document}